\newtheorem{theorem}{Theorem}
\date{\today}
\begin{document}
\title{A theory of quasiballistic spin transport}

\author{Jeffrey Z. Song}

\affiliation{Department of Physics, University of Washington, Seattle, WA 98195, USA}

\author{Hyunsoo Ha}

\affiliation{Department of Physics, Princeton University, Princeton, New Jersey 08544, USA}

\author{Wen Wei Ho}
\affiliation{Department of Physics, National University of Singapore, Singapore 117551}
\affiliation{Centre for Quantum Technologies, National University of Singapore, 3 Science Drive 2, Singapore 117543}

\author{Vir B. Bulchandani}

\affiliation{Department of
Physics and Astronomy, Rice University, 6100 Main Street
Houston, TX 77005, USA}

\begin{abstract}
A recent work [Mierzejewski et al., Phys.~Rev.~B {\bf 107}, 045134 (2023)] observed ``quasiballistic spin transport'' -- long-lived and transiently ballistic modes of the magnetization density -- in numerical simulations of infinite-temperature XXZ chains with power-law exchange interactions. We develop an analytical theory of such quasiballistic spin transport. Previous work found that this effect was maximized along a specific locus in the space of model parameters, which interpolated smoothly between the integrable Haldane-Shastry and XX models and whose shape was estimated from numerics. We obtain an analytical estimate for the lifetime of the spin current and show that it has a unique maximum along a different locus, which interpolates more gradually between the two integrable points. We further rule out the existence of a conserved two-body operator that protects ballistic spin transport away from these integrable points by proving that a corresponding functional equation has no solutions. We discuss connections between our approach and an integrability-transport conjecture for spin.
\end{abstract}
\maketitle
\section{Introduction} 
Lattice models in physics generically exhibit diffusive transport of their local conserved charges at high temperature. The most commonly encountered exceptions to this rule arise in one spatial dimension, as a consequence of integrability. At one extreme, there exist models that can be mapped to non-interacting particles, such as the spin-$1/2$ XX~\cite{takahashi2005thermodynamics} and Haldane-Shastry~\cite{spinongas} chains, which therefore exhibit ballistic transport of all their local conserved charges~\cite{Antal,Sirker_2011,BulchandaniHa}. A more complicated intermediate behavior arises in the spin-$1/2$ XXZ chain, which can support ballistic, diffusive, or superdiffusive transport of spin, depending on the model anisotropy~\cite{Bertini}.

Thus, unlike in continuous systems, faster-than-diffusive transport in lattice models is generally linked to the presence of integrability~\cite{Castella_1995,Sirker_2011}. As one perturbs away from a given integrable point, such faster-than-diffusive dynamics usually gives way to normal diffusion. However, the relaxation timescales controlling this crossover to normal diffusion can be extremely long, even when the system under consideration is chaotic in all other respects~\cite{Ferreira_2020,DeNardis21,Roy_2023,McCarthy,McRoberts,wang2025breakdownsuperdiffusionperturbedquantum,Mierzejewski_2023,Yang_2024,HBS}. 

In particular, several recent works have observed numerical evidence for long-lived and transiently ballistic (or ``quasiballistic'') dynamics of conserved charges in systems that appear to be far from integrability. Throughout this paper, we follow Ref. \cite{Mierzejewski_2023} in referring to this phenomenon as ``quasiballistic transport'', with the understanding that this terminology refers to transiently ballistic dynamics rather than the specific behavior of any conventional linear-response transport coefficient. For example, spin transport in power-law-interacting spin-$1/2$ XXZ chains~\cite{Mierzejewski_2023,Yang_2024} (see also \cite{Ferreira_2020}) and charge transport in interacting fermionic lattice models in any spatial dimension~\cite{HBS} can exhibit this phenomenon. Both sets of work focus on infinite temperature, where such behavior is intuitively least likely to occur. In the fermionic case, this physics was explained theoretically by computing an effective lifetime for the charge current, which could be made arbitrarily small as a function of the interaction range. Meanwhile, no detailed theoretical explanation has been proposed for quasiballistic spin transport in general and providing such an explanation is the primary goal of this work. 

The specific findings to be explained are as follows. It has been argued~\cite{Mierzejewski_2023,Yang_2024} that for spin-$1/2$ XXZ chains with power-law exchange interactions $J(r) \propto 1/r^{\alpha}$ between spins, long-lived and transiently ballistic modes of spin can arise for all values of the power-law exponent $\alpha$, with maximal enhancement of these ballistic modes for model anisotropies $\Delta^*(\alpha) \approx e^{2-\alpha}$ interpolating smoothly between the Haldane-Shastry chain (at $\alpha=2$) and the XX chain (as $\alpha \to \infty$), both of which exhibit ballistic spin transport as discussed above. However, the evidence that this locus exists and the extrapolation of its shape have so far been purely numerical. Our main contribution in this paper is an analytical prediction for the locus $\Delta^*(\alpha)$ on which the lifetime of the spin current is maximized, as a ratio of two infinite series. A nontrivial consistency check on our prediction is that it interpolates smoothly between the two ballistic points. However, our expression has the asymptotic behavior $\Delta^*(\alpha) \sim 3 \cdot 2^{-\alpha}$ as $\alpha \to \infty$, whose exponential dependence on $\alpha$ matches the earlier conjecture from numerics~\cite{Mierzejewski_2023,Yang_2024}, but decays more slowly when $\alpha$ is large.

We emphasize that Hamiltonians of this type are not merely a theoretical curiosity. Recent advances in quantum devices have renewed interest in the dynamics of long-range interacting spin chains~\cite{Bohnet_2016,Schuckert20,Zu_2021,Joshi_2022,franke2023quantum}. Importantly, these platforms inherently favour the study of real-time dynamics. Moreover, their relatively high effective temperatures, which can be undesirable when it comes to exploring traditional questions about ground-state physics, are ideal for studying the far-from-equilibrium transport properties of interest in this work. There now exist several concrete proposals~\cite{arrazola2016digital,Bermudez17,Birnkammer22} for how to realize precisely the power-law interacting XXZ Hamiltonians studied here using trapped ions and a recent experiment~\cite{Kranzl23} probing their dynamics. It should, however, be borne in mind that relatively small power-law exponents $0\leq \alpha \leq 3$ are most physically natural in these platforms~\cite{fossfeig2024progresstrappedionquantumsimulation}, while the discussion in our paper applies to $\alpha > 3/2$.

Before outlining our analytical approach, it is instructive to consider what a complete theory of quasiballistic spin transport would entail. First, one would have to rule out the possibility of ballistic spin transport for $\alpha \neq 2$ in the thermodynamic limit; while this seems highly unlikely, it is not possible to rule out using any theoretical technique that we are aware of. Assuming that ballistic transport could indeed be ruled out, the most convincing estimate of the relaxation time of a transiently ballistic spin current would then come from the spin diffusion constant, whose rigorous evaluation from the Kubo formula similarly appears to lie beyond the reach of existing analytical techniques.

In this work, we pursue a less ambitious yet analytically tractable version of the path sketched above. Rather than ruling out ballistic spin transport fully for $\alpha \neq 2$, we rule out the possibility of ballistic spin transport protected by a strictly two-body conservation law, as it is at the Haldane-Shastry~\cite{haldane1994physicsidealsemiongas,Bernevig_2001,Sirker_2011} and XX points. Similarly, rather than attempting to evaluate the spin diffusion constant from the Kubo formula, we estimate the lifetime of the spin current directly from its instantaneous rate of change, a strategy that has proved effective in fermionic systems~\cite{HBS} (see also~\cite{Kim_2015}) and is reasonable in the absence of ballistic transport. We conclude by discussing the scope of our approach and the connection between integrability and transport more broadly. In particular, we prove that the only spin-$1/2$ chains of the form Eq. \eqref{eq:Ham} with an exactly conserved spin current are Haldane-Shastry and XX chains (Theorem \ref{thm:thm1} of Appendix \ref{appendix:perfball}).

\section{Decay rate of the spin current}
We consider infinite, translation-invariant, inversion-symmetric spin-$1/2$ chains with global spin-rotation symmetry about the $z$-axis. In the absence of external magnetic fields, the most general two-body Hamiltonian with these properties is given by
\begin{align}
    \hat{H} = \sum_{\substack{m,n\in\mathbb{Z} \\ m < n}}a(m-n)(\hat{S}^x_m\hat{S}^x_{n} + \hat{S}^y_{m}\hat{S}^y_{n}) + b(m-n)\hat{S}^z_m\hat{S}^z_{n},\label{eq:Ham}
\end{align}
where $a(r)$ and $b(r)$ are arbitrary even functions and it will be convenient~\footnote{For power-law exponents $1/2 < \alpha \leq 3/2$, there is evidence~\cite{Schuckert20,Joshi_2022} that transport in models of this type crosses over to superdiffusion, which in turn gives way to unphysical instantaneous relaxation for $\alpha < 1/2$. Indeed, the analysis in our paper is complicated for $\alpha \leq 3/2$ by the failure of Eq. \eqref{eq:currentnorm} to converge. Restoring a finite-size cutoff in Eq. \eqref{eq:lifetime} and performing a numerical finite-size scaling analysis, not shown, we see behavior consistent with the qualitative picture proposed in Ref.~\cite{Schuckert20} and therefore with the absence of quasiballistic transport for $\alpha \leq 3/2$.} to assume that $a(r),\,b(r) = o(r^{-3/2})$ as $r \to \infty$. We additionally assume that $a$ is non-zero (i.e.~$a(r) \neq 0$ for some integer $r \neq 0$) to exclude Ising models without spin transport.

The magnetization is a local conserved charge of the Hamiltonian Eq. \eqref{eq:Ham} because the Heisenberg-picture spin operator $\hat{S}_n^z(t)$ at each site satisfies the discrete, operator-valued continuity equation
\begin{equation}
    \partial_t \hat{S}_n^z+\hat{j}_{n+1}-\hat{j}_n=0,
\end{equation}
where the local spin-current operator can be defined as
\begin{equation}
        \hat{j}_n = \sum_{r=1}^\infty a(r)\sum_{m=n}^{n+r-1} \hat{S}_{m-r}^{x}\hat{S}_{m}^{y}-\hat{S}_{m}^{x}\hat{S}_{m-r}^{y}.
\end{equation}
By the continuity equation, the total current operator coincides with the rate of change of the ``many-body position operator''~\cite{lieb1961two}, namely $\hat{J}=\sum_{n \in \mathbb{Z}} \hat{j}_n = \sum_{n \in \mathbb{Z}}  n \partial_t \hat{S}^z_n$, where 
\begin{equation}
\label{eq:spincurrent}
\hat{J} = \sum_{n\in\mathbb{Z}} \sum_{r=1}^{\infty} r a(r) (\hat{S}_{n}^x \hat{S}_{n+r}^y - \hat{S}_{n}^y \hat{S}_{n+r}^x).
\end{equation}
For generic, chaotic, inversion-symmetric Hamiltonians we expect diffusive spin transport above zero temperature (see Appendix \ref{appendix:twobody} for details). In such systems, a reasonable estimate for the infinite-temperature decay rate of the spin current can be obtained as follows~\cite{HBS,Kim_2015}. 

We first note that the rate of change $\dot{\hat{J}} = i [\hat{H},\hat{J}]$ of the spin current is given by the three-body operator  
\begin{align}
    \dot{\hat{J}} = \sum_{n\in\mathbb{Z}} \sum_{\substack{u,v \in \mathbb{Z}_{\neq 0} \\ u+v > 0}} D(u,v)(\hat{S}^x_{n+u} \hat{S}^x_{n-v} + \hat{S}^y_{n+u} \hat{S}^y_{n-v}) \hat{S}^z_n,
    \label{eq:dJ/dt}
\end{align}
where the function $D(u,v) = (u-v) a(u) a(v) + (u+v) a(u+v) (b(u)-b(v))$. At infinite temperature, this operator has zero mean but a non-zero, extensive variance, with a density
\begin{equation}
   \lim_{L\to \infty} \frac{\langle \dot{\hat{J}}(t)^2 \rangle_{\beta=0}}{L} =\frac{1}{64}\sum_{\substack{u,v\in \mathbb{Z}_{\neq 0} \\ u+v \neq 0}} D(u,v)^2
\end{equation}
in the thermodynamic limit (for concreteness, assume that the model Eq. \eqref{eq:Ham} is truncated to finite systems of $L$ sites by discarding interaction ranges $r > L$). In order to yield a characteristic timescale for the decay of the spin current, this must be normalized by the variance of latter, whose limiting density
\begin{equation}
\label{eq:currentnorm}
\lim_{L\to \infty} \frac{\langle \hat{J}(t)^2 \rangle_{\beta=0}}{L}  = \frac{1}{16} \sum_{n \in \mathbb{Z}_{\neq 0}} n^2a(n)^2
\end{equation}
is both non-zero and finite by assumption. Since both these quantities are independent of time, their ratio yields a dimensionful estimate for the instantaneous decay rate of $\hat{J}(t)$ for all time, namely
\begin{equation}\tau_{\mathrm{eff}}^{-1}=\lim_{L \to \infty}  \sqrt{\langle \dot{\hat{J}}(t)^2 \rangle_{\beta=0}/\langle \hat{J}(t)^2 \rangle_{\beta=0}}. 
\label{eq:lifetime}
\end{equation}
In particular, minimizing $\tau_{\mathrm{eff}}^{-1}$ over the space of model parameters $a$ and $b$ should maximize the lifetime of the spin current and therefore lead to a maximal enhancement of quasiballistic spin transport~\cite{HBS}. We note that this strategy is reasonable
provided $\hat{J}$ has negligible overlap with conserved quantities in the thermodynamic limit~\cite{Zotos_1997,Sirker_2011,ProsenBound}. We will revisit this assumption more carefully below. First, we verify that our formula Eq. \eqref{eq:lifetime} is sufficient to capture the qualitative features of quasiballistic spin transport that have been observed numerically in long-range interacting XXZ chains~\cite{Mierzejewski_2023,Yang_2024}. Our approach is similar in spirit to the ``moment method'' for approximating the dynamical spin structure factor~\cite{de1958inelastic,redfield1968moment}.

\section{Application to long-range interacting XXZ chains}
\subsection{Existence and uniqueness of an optimal anisotropy}
Let us now specialize to spin chains with XXZ-type anisotropy, $b(r) = \Delta a(r)$. Then we can ask the following question: for a given choice of exchange interactions $a(r)$, is there a value $\Delta = \Delta^*$ of the model anisotropy that maximizes the lifetime of the spin current, as measured by $\tau_{\mathrm{eff}}$, and is this maximum unique? Existence and uniqueness of such a maximum is perhaps the simplest qualitative feature of the previous numerical studies~\cite{Mierzejewski_2023,Yang_2024}, and our first contribution is to show that this follows from our Eq. \eqref{eq:lifetime} on general grounds.

Specifically, for XXZ-type interactions we find that
\begin{equation}
\label{eq:tauXXZ}
\tau_{\mathrm{eff}}^{-1}(\Delta) = \frac{1}{2}\sqrt{\frac{A \Delta^2 + 2 B \Delta + C}{\sum_{n \in \mathbb{Z}_{\neq 0}} n^2 a(n)^2}},
\end{equation}
where
\begin{align}
\nonumber 
A &= \sum_{\substack{u,v \in \mathbb{Z}_{\neq 0}\\ u+v \neq 0}} (u+v)^2 a(u+v)^2 (a(u)-a(v))^2, \\ 
\nonumber 
B &= \sum_{\substack{u,v \in \mathbb{Z}_{\neq 0}\\ u+v \neq 0}} (u^2-v^2)a(u)a(v)a(u+v)(a(u)-a(v)), \\
C &= \sum_{\substack{u,v \in \mathbb{Z}_{\neq 0}\\ u+v \neq 0}} (u-v)^2 a(u)^2 a(v)^2.
\label{eq:defAB}
\end{align}
In particular, since $A > 0$ by assumption, it follows by Eq. \eqref{eq:tauXXZ} that $\tau_{\mathrm{eff}}^{-1}$ has a unique, global minimum at 
\begin{equation}
\label{eq:Deltastar}
\Delta^* = -B/A,
\end{equation}
which was to be explained.

\begin{figure}[!t]
\includegraphics[width=\linewidth]{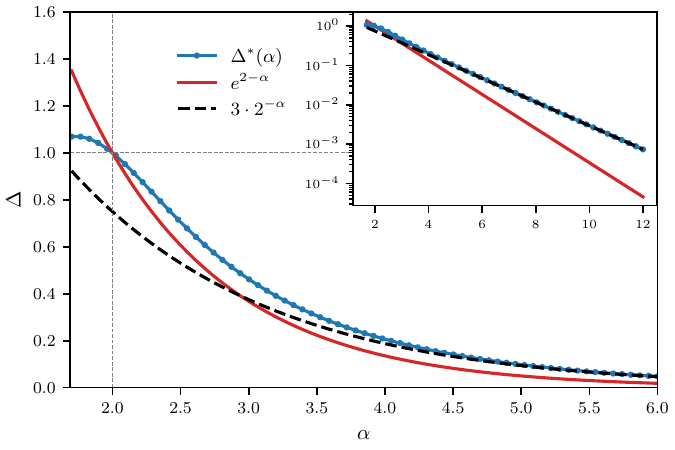}
\caption{\textbf{Main figure}: predicted optimal anisotropy $\Delta^*(\alpha)$ for quasiballistic spin transport as a function of the power-law exponent $\alpha$ (blue line). This is compared to the previous estimate $\Delta_{\mathrm{num}}^*(\alpha) \approx e^{2-\alpha}$ from numerics~\cite{Mierzejewski_2023, Yang_2024} (red line). Both these loci intersect the integrable Haldane-Shastry point when $\alpha = 2$ and recover the integrable XX point as $\alpha \to \infty$. \textbf{Inset}: tails of $\Delta^*(\alpha)$ versus $e^{2-\alpha}$ for large $\alpha$. The distinct asymptotic approach to the integrable XX point between the two estimates is clear.}
\label{fig:delta}
\end{figure}

\subsection{Power-law interactions}
We now specialize further to power-law exchange interactions $a(x)=1/|x|^\alpha$, since these are both experimentally motivated~\cite{arrazola2016digital,Bermudez17,Birnkammer22,Kranzl23} and the subject of the previous numerical studies~\cite{Mierzejewski_2023,Yang_2024} of quasiballistic spin transport. We further assume that $\alpha > 3/2$ to ensure consistency of our approach. Then Eq. \eqref{eq:Deltastar} predicts an exact expression for the optimal $\Delta^*$ for quasiballistic behavior as a function of $\alpha$. The resulting locus in the space of model parameters is a ratio of two infinite series, namely
\begin{equation}
\label{eq:powerlawdeltapred}
    \Delta^*(\alpha) = \frac{\sum_{\substack{u,v \in \mathbb{Z}_{\neq 0}\\ u+v \neq 0}} \frac{(u^2-v^2)(|u|^{\alpha}-|v|^\alpha)}{|u|^{2\alpha}|v|^{2\alpha}|u+v|^\alpha}}{\sum_{\substack{u,v \in \mathbb{Z}_{\neq 0}\\ u+v \neq 0} } \frac{(|u|^\alpha - |v|^\alpha)^2}{|u|^{2\alpha}|v|^{2\alpha}|u+v|^{2(\alpha-1)}}}.
\end{equation}
This curve is plotted in Fig. \ref{fig:delta}, where it is compared to the previous estimate~\cite{Mierzejewski_2023,Yang_2024} 
$\Delta^*_{\mathrm{num}}(\alpha) \approx e^{2-\alpha}$ for optimizing quasiballistic transport based on numerics. A nontrivial check on our prediction Eq. \eqref{eq:powerlawdeltapred} is that just like the numerical estimate $\Delta^*_{\mathrm{num}}(\alpha)$, it ``knows about'' integrability. Specifically, Eq. \eqref{eq:powerlawdeltapred} recovers the integrable Haldane-Shastry model, $\Delta^*(2)= 1$. Similarly, Eq. \eqref{eq:powerlawdeltapred} recovers the integrable XX model in the limit of nearest-neighbour interactions, $\Delta^*(\alpha) \to 0$ as $\alpha \to \infty$. Thus our analytical prediction recovers the two known integrable points in the space of power-law interacting XXZ models. (Note that strictly speaking, $\alpha=\infty$ is an integrable line.)

However, it is clear from Fig. \ref{fig:delta} that the approach of Eq. \eqref{eq:powerlawdeltapred} to the nearest-neighbour XX point is distinct from that of $\Delta_{\mathrm{num}}^*(\alpha)$, and in fact we find a slower approach
\begin{equation}
\label{eq:Deltastarlargealpha}
\Delta^*(\alpha) \sim 3 \cdot 2^{-\alpha}
\end{equation}
as $\alpha \to \infty$ than predicted before. To see this, note that for large $\alpha$, both $A$ and $B$ in Eq. \eqref{eq:defAB} are dominated by the contributions of the four lowest order terms  $(u,v) \in \{(1,-2),(-1,2),(2,-1),(-2,1)\}$. In particular, we find that $A \sim 4$ and $B\sim -12 \cdot 2^{-\alpha}$ as $\alpha \to \infty$, and substituting into Eq. \eqref{eq:Deltastar} recovers Eq. \eqref{eq:Deltastarlargealpha}.

\subsection{Optimal relaxation rate and perturbative regime}

\begin{figure}[t]
\centering
\includegraphics[trim=5.4pt 9pt 2pt 7pt, clip, width=\linewidth]{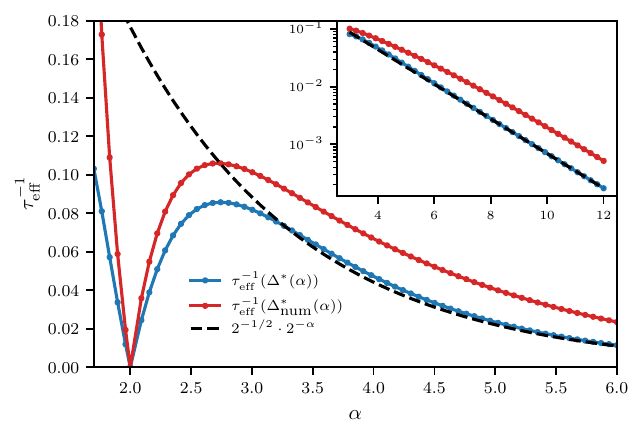}
\caption{\textbf{Main figure}: minimum decay rate for the spin current $\tau_{\mathrm{eff}}^{-1}(\Delta^*(\alpha))$ predicted by our approach (blue line), versus the decay rate $\tau_{\mathrm{eff}}^{-1}(\Delta_{\textrm{num}}^*(\alpha))$ that we estimate from the previous numerical extrapolation of the optimal anisotropy (red line). The decay rate vanishes as expected at the Haldane-Shastry point $\alpha =2$ and approaching the XX point as $\alpha \to \infty$. \textbf{Inset}: tails of  $\tau_{\mathrm{eff}}^{-1}(\Delta^*(\alpha))$ and $\tau_{\mathrm{eff}}^{-1}(\Delta_{\textrm{num}}^*(\alpha))$ for large $\alpha$. It is apparent that both estimates decay exponentially in $\alpha$ at the same rate, with our value of $\Delta^*(\alpha)$ yielding a constant-factor ($\approx 1/3$) reduction in the attainable decay rate as $\alpha \to \infty$.}
\label{fig:relaxation}
\end{figure}

Finally, it is instructive to consider the optimal relaxation rate predicted by our approach. This is given by combining Eqs. \eqref{eq:tauXXZ} and \eqref{eq:Deltastar}  to  obtain $\tau_{\mathrm{eff}}^{-1}(\Delta^*(\alpha))$, which is plotted in Fig. \ref{fig:relaxation}. As expected, this vanishes at $\alpha = 2$ and as $\alpha \to \infty$. We further find that the relaxation rate remains small over the entire interval $2 < \alpha < \infty$, consistent with the previous observation that the power-law XXZ chain is never ``far'' from ballistic transport~\cite{Mierzejewski_2023}.

While this is perhaps surprising for intermediate values of $\alpha$, it is expected that as $\alpha \to \infty$, power-law exchange interactions will merely yield a small (integrability-breaking) next-nearest-neighbour ``hopping'' correction $a(2) = 2^{-\alpha}$ to the integrable XX point~\cite{Mierzejewski_2023}. Moreover, for anisotropies $\Delta$ that are exponentially small in $\alpha$, the nearest-neighbour Ising ``interaction'' term $b(1) = \Delta$ will also be a small (integrability-preserving) perturbation to the XX point. This raises the questions of which small perturbation bottlenecks (i.e. lower bounds) our estimate of the relaxation rate of the spin current for $\alpha \gg 1$ and very small values of $\Delta$
-- on physical grounds, this should be the integrability-breaking hopping term -- and how far our estimate for the optimal lifetime $\tau_{\mathrm{eff}}(\Delta^*(\alpha))$ improves upon the lifetime $\tau_{\mathrm{eff}}(\Delta^*_{\mathrm{num}}(\alpha))$ that we estimate from the previous numerical extrapolation~\cite{Mierzejewski_2023,Yang_2024} in this perturbative regime. (We note that the emergence of quasiballistic spin transport from similar perturbations to the XX model was studied previously using different techniques~\cite{Ferreira_2020}.)

To explore these questions quantitatively, suppose that $\Delta \sim k \cdot c^{-\alpha}$ as $\alpha \to \infty$ for some constant $k$ and base $c > 1$. Subject to this assumption, we find that the dominant contribution to the decay rate $\tau_{\mathrm{eff}}^{-1}$ depends sensitively on $c$, with
\begin{equation}
\tau_{\mathrm{eff}}^{-1} \sim  \begin{cases} \frac{\Delta}{\sqrt{2}}, & c < 2, \\ \sqrt{\frac{\Delta^2 -6\cdot 2^{-\alpha}\Delta + 10\cdot 4^{-\alpha}}{2}}, & c=2, \\
\sqrt{5} \cdot 2^{-\alpha}, & c > 2,
\end{cases}
\end{equation}
as $\alpha \to \infty$. For suboptimal choices of $\Delta$ with $c \neq 2$, these formulas imply that $\tau_{\mathrm{eff}}^{-1}$ is lower-bounded by the contribution $\sim \sqrt{5}\cdot 2^{-\alpha}$ from integrability-breaking next-nearest-neighbour hopping, as expected. However, for the specific value $c=2$, a nontrivial cancellation between hopping and interaction terms in Eq. \eqref{eq:tauXXZ} predicts a potentially slower relaxation rate $\tau^{-1}_{\mathrm{eff}} \sim \sqrt{\frac{(k-3)^2+1}{2}} \cdot 2^{-\alpha}$, which attains its minimum when $k=3$. It follows that $\tau^{-1}_{\mathrm{eff}}(\Delta^*_{\mathrm{num}}(\alpha)) \sim \sqrt{5} \cdot 2^{-\alpha}$ is dominated by the hopping contribution, while $\tau^{-1}_{\mathrm{eff}}(\Delta^*(\alpha)) \sim (1/\sqrt{2})\cdot 2^{-\alpha}$ by Eq. \eqref{eq:Deltastarlargealpha}, predicting a decay rate for the spin current that is more than three times slower than would arise from next-nearest-neighbour hopping alone, see Fig. \ref{fig:relaxation}.

\section{Discussion}
We have developed a theory of quasiballistic spin transport, showing that the analytical approach proposed for charge transport in Ref.~\cite{HBS} can be extended to spin transport and qualitatively explains the numerical findings of Refs.~\cite{Mierzejewski_2023,Yang_2024}. Our analysis of this phenomenon in the perturbative regime of large $\alpha$ further suggests that the observed quasiballistic behavior is enabled by a non-trivial interplay between hopping and interaction terms that suppresses the decay of the spin current more strongly than one might na{\"i}vely expect.

This analytical approach is reasonable for the power-law interacting XXZ models in this paper provided they are not ``secretly'' ballistic away from the known integrable points $\alpha = 2$ and $\alpha \to \infty$. More precisely, our approach rules out perfectly ballistic spin transport (i.e. exact conservation of the spin current operator) away from these points, but it does not rule out imperfectly ballistic spin transport, as exemplified by the nearest-neighbour spin-$1/2$ XXZ chain with anisotropy $0 < |\Delta| < 1$. In the latter case, the spin current is not exactly conserved, but it has sufficiently large overlap with a conserved operator in the thermodynamic limit that the Mazur bound~\cite{mazur1969non,suzuki1971ergodicity} guarantees ballistic spin transport~\cite{Zotos_1997,ProsenBound} and the approach taken in this paper becomes invalid.

While we cannot exclude this possibility based on available theoretical techniques, we prove a partial result justifying our approach in Appendix \ref{appendix:twobody}. Specifically, we show that for the models studied in this paper, there is no two-body conserved charge that could give rise to a non-zero Mazur bound for $\Delta \neq 0, \pm 1$. Unfortunately, this does not eliminate the possibility of quasilocal, many-body conserved charges that protect the spin current, as in the nearest-neighbour XXZ chain~\cite{ProsenBound,Ilievski_2016}. At the same time, a systematic construction of quasilocal or pseudolocal~\cite{Ilievski_2016} conserved charges for an arbitrary lattice model without prior knowledge of its underlying integrable structure is an extremely challenging problem whose solution would have immediate implications for various important open problems in many-body physics, such as the rigorous justification of many-body localization~\cite{Basko_2006,OganesyanHuse,VoskAltman,deroeck2023rigoroussimpleresultsslow}.

Meanwhile, there is a long-standing expectation in the literature that ballistic transport of \textit{some} local conserved charge should always follow from integrability; this belief is sometimes referred to as the ``integrability-transport conjecture''~\cite{Castella_1995,Zotos_1997,Sirker_2011}. While ballistic energy transport in the nearest-neighbour XXZ model is consistent with this conjecture, the precise relationship between integrability and spin transport is murkier~\cite{Bertini}. We propose the following revision of the integrability-transport conjecture for inversion-symmetric lattices of spins: spin transport is ballistic at all temperatures only if the system is integrable. We do not know of any counterexamples to this conjecture, but it seems difficult to prove for the reasons discussed above.

A corollary of our analysis is a simplified integrability-transport theorem (see Theorem \ref{thm:thm1} of Appendix \ref{appendix:perfball}): for spin-$1/2$ chains of the form Eq. \eqref{eq:Ham}, we show that spin transport is only perfectly ballistic if the model is equivalent to either a nearest-neighbour XX chain or a Haldane-Shastry chain. Extending this result to imperfect ballistic transport and pseudolocal conserved charges, along the lines of recent rigorous results on non-integrability of certain spin chains~\cite{yamaguchi2024completeclassificationintegrabilitynonintegrability,yamaguchi2024proofabsencelocalconserved,shiraishi2025complete}, seems desirable for understanding the relationship between integrability and transport more generally.

\section{Acknowledgments}
We thank D.~A. Huse for numerous helpful discussions and collaborations on related work and M.~Mierzejewski for drawing quasiballistic spin transport to our attention and comments on the manuscript. H.~H.~is supported by NSF QLCI grant OMA-2120757. W.~W.~H.~is supported by the National Research Foundation (NRF), Singapore, through the NRF Felllowship NRF-NRFF15-2023-0008, and through the National Quantum Office, hosted in A*STAR, under its Centre for Quantum Technologies Funding Initiative (S24Q2d0009). 
\bibliography{bibl.bib}

\appendix
\onecolumngrid

\section{Sufficient conditions for ballistic spin transport}
In this Appendix, we discuss various sufficient conditions for the spin-chain Hamiltonian Eq. \eqref{eq:Ham} to support ballistic spin transport at non-zero temperature.
\subsection{Necessary and sufficient condition for perfectly ballistic spin transport}
\label{appendix:perfball}
Recall that if the spin current is exactly conserved, $\dot{\hat{J}}=i[\hat{H},\hat{J}]=0$, we expect perfectly ballistic spin transport and a non-zero spin Drude weight. By Eq. \eqref{eq:dJ/dt}, the spin current generated by the Hamiltonian Eq. \eqref{eq:Ham} is exactly conserved if and only if $D(u,v)$ is identically zero, i.e. the functional equation
\begin{equation}
\label{eq:purelyball}
(u-v) a(u) a(v) + (u+v) a(u+v) (b(u)-b(v)) = 0
\end{equation}
holds for all $u,v \in \mathbb{Z}$ such that $u,\,v,\, u+v \neq 0$. This functional equation is satisfied by both the nearest-neighbour XX and Haldane-Shastry models of interest in this work, which therefore exhibit perfectly ballistic spin transport. We now show that provided $b(r) \to 0$ as $r \to \infty$ (which is necessary on physical grounds) these are the only solutions to Eq. \eqref{eq:purelyball}, up to a choice of sublattice scale and a standard~\cite{takahashi2005thermodynamics} spin-flip-type unitary transformation on even-parity sites of each sublattice.

Specifically, we have the following:
\begin{theorem}
\label{thm:thm1}
Suppose that $a,\,b : \mathbb{Z}_{\neq 0} \to \mathbb{R}$ are even functions of their argument that solve Eq. \eqref{eq:purelyball}, where $a$ is non-zero and $b(r) \to 0$ as $r \to \infty$. Let $E_a = \{r \in \mathbb{Z}_{> 0} : a(r) \neq 0\}$ and $E_b = \{r \in \mathbb{Z}_{> 0} : b(r) \neq 0\}$ denote their respective supports. Then either
\begin{enumerate}
 \item $E_a = \{R\}$ for some integer $R > 0$ and $E_b = \emptyset$, which gives rise to $R$ identical nearest-neighbour XX models on the sublattices $\{R\mathbb{Z} + k : k=0,1,\ldots,R-1\}$ of $\mathbb{Z}$.
    \item $E_a = E_b = \{ n R : n \in \mathbb{Z}_{> 0}\}$ for some integer $R>0$ and either
    \begin{equation}
    \label{eq:usualHS}
    a(nR) = b(nR) = \frac{1}{n^2}a(R), \quad n \in \mathbb{Z}_{>0},
    \end{equation}
    which gives rise to $R$ identical Haldane-Shastry models on the sublattices $\{R\mathbb{Z} + k : k=0,1,\ldots,R-1\}$ of $\mathbb{Z}$, or
    \begin{equation}
    a(nR) = \frac{(-1)^{n+1}}{n^2}a(R), \quad b(nR) = -\frac{1}{n^2}a(R), \quad n \in \mathbb{Z}_{>0},
    \end{equation}
    which is unitarily equivalent to the latter case under conjugation by $\hat{U} = \prod_{k=0}^{R-1}\prod_{n \in \mathbb{Z}} 2\hat{S}^{z}_{2nR+k}$.
\end{enumerate}
\end{theorem}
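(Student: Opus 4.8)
\emph{Overall plan.}
Write $F(u,v)$ for the left-hand side of Eq.~\eqref{eq:purelyball}, so the hypothesis is $F(u,v)=0$ for all integers with $u,v,u+v\neq 0$. Since $a,b$ are even we also have the companion identity $F(u,-v)=(u+v)a(u)a(v)+(u-v)a(u-v)(b(u)-b(v))$, and both $F(u,v)$ and $F(u,-v)$ will be used repeatedly. The strategy is: (i) pin down the support $E_a$ combinatorially; (ii) dispatch the degenerate branch $E_a=\{R\}$; (iii) in the branch $E_a=R\mathbb{Z}_{>0}$, first force $E_b\subseteq R\mathbb{Z}_{>0}$, then rescale to the case $R=1$ and solve the resulting functional equation by a differencing trick, invoking the decay $b(r)\to 0$ only at the very end to fix the remaining constants.

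\emph{Steps 1 and 2: structure of $E_a$, and the XX branch.}
If $u,v\in E_a$ are distinct positive integers with $a(u+v)=0$, then $F(u,v)=0$ reduces to $(u-v)a(u)a(v)=0$, impossible; likewise $F(u,-v)=0$ with $a(|u-v|)=0$ gives $(u+v)a(u)a(v)=0$, also impossible. Hence $E_a$ is closed under sums and absolute differences of distinct elements, and an elementary lemma (the subtractive Euclidean algorithm) shows that such a set is either $\{R\}$ or $R\mathbb{Z}_{>0}$, with $R=\min E_a$. In the first case the first term of $F$ vanishes identically, so $F(u,v)=0$ says $(u+v)a(u+v)(b(u)-b(v))=0$; taking $u+v=R$ forces $b(k)=b(k+R)$ for all $k\ge 1$, and a periodic function tending to $0$ is identically $0$, so $E_b=\emptyset$. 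Interactions at spacing $R$ decouple $\mathbb{Z}$ into $R$ independent nearest-neighbour XX chains, which is case 1.

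\emph{Step 3: the Haldane--Shastry branch.}
Suppose $E_a=R\mathbb{Z}_{>0}$. If $b(p)\neq 0$ with $R\nmid p$, pick any large $v\equiv -p\pmod R$; then $a(p)=a(v)=0$ while $a(p+v)\neq 0$, so $F(p,v)=0$ gives $b(p)=b(v)$, and letting $v\to\infty$ contradicts $b(v)\to 0$. Thus $E_b\subseteq R\mathbb{Z}_{>0}$, and since the equation at arguments $(uR,vR)$ is $R$ times the same equation for $\tilde a(n)=a(nR)$, $\tilde b(n)=b(nR)$, we may assume $R=1$, i.e. $a(n)\neq 0$ for all $n\ge 1$. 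Now write out $F(u,1)=0$ and $F(u,-1)=0$ for $u\ge 2$: each expresses $b(u)-b(1)$ up to a nonzero factor in terms of $a$, and $a(u)\neq 0$ forbids $b(u)=b(1)$, so dividing the two relations eliminates $b$ and yields $(u+1)^2a(u+1)=(u-1)^2a(u-1)$. Hence $n^2a(n)$ is $2$-periodic: $a(n)=c_{\mathrm{odd}}/n^2$ for odd $n$ and $a(n)=c_{\mathrm{even}}/n^2$ for even $n$, with $c_{\mathrm{odd}}=a(1)\neq 0$ and $c_{\mathrm{even}}\neq 0$. Substituting back into $F(u,1)=0$ gives closed forms for $b(u)-b(1)$ whose limits through even and through odd $u$ are $-c_{\mathrm{even}}$ and $-c_{\mathrm{odd}}^2/c_{\mathrm{even}}$; since $b(u)\to 0$, both equal $-b(1)$, forcing $c_{\mathrm{even}}^2=c_{\mathrm{odd}}^2$. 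The two sign choices reproduce exactly $a(n)=b(n)=c/n^2$ and $a(n)=(-1)^{n+1}c/n^2$, $b(n)=-c/n^2$; undoing the rescaling gives case 2, with the second branch unitarily equivalent to the first via the stated $\hat U$ (a staggered spin flip acting by $a(nR)\mapsto(-1)^na(nR)$ with $b$ fixed). Finally one checks directly that both families solve Eq.~\eqref{eq:purelyball}.

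\emph{Main obstacle.}
The substantive work is Step 3. Everything hinges on arranging the differencing trick so that nothing one divides by --- an $a$-value, or $b(u)-b(1)$ --- can vanish, which is precisely where $E_a=R\mathbb{Z}_{>0}$ is indispensable, and on invoking $b(r)\to 0$ exactly twice, along even and along odd arguments, to both fix $b(1)$ and couple the constants $c_{\mathrm{odd}},c_{\mathrm{even}}$. The combinatorial lemma of Step 1 is routine but must treat the degenerate case $|E_a|=1$ --- exactly the XX branch --- separately, since then $E_a$ fails to be closed under nontrivial sums.
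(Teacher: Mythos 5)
Your proposal is correct and follows essentially the same route as the paper's proof: establish closure of $E_a$ under sums and differences to get $E_a=\{R\}$ or $E_a=R\mathbb{Z}_{>0}$, force $E_b\subseteq E_a$ via the decay of $b$, derive the recursion $(u+1)^2a(u+1)=(u-1)^2a(u-1)$ from the substitutions $v=\pm R$, and resolve the resulting parity-dependent constant into the two sign branches. The only (harmless) deviations are organizational: you branch on $|E_a|$ rather than on whether $b\equiv 0$, which lets closure under sums replace the paper's separate unboundedness argument, and you fix $c_{\mathrm{even}}^2=c_{\mathrm{odd}}^2$ by taking the decay limit of $b(u)-b(1)$ along even and odd $u$ separately, where the paper instead uses a telescoping consistency identity among $b$-differences to force $\lambda^2=1$ before invoking decay once.
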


\begin{proof}
First suppose that $b$ is the zero function. By assumption there exists $R > 0$ such that $a(R) \neq 0$. Then for all $x>0$, $x \neq R$, Eq. \eqref{eq:purelyball} implies that
\begin{equation}
(R-x)a(R)a(x) = 0,
\end{equation}
so that $a(x) = 0$. We deduce that $E_a = \{R\}$ and Case 1 of the theorem follows. 

Next suppose that $b$ is non-zero, so there exists some $x>0$ such that $b(x) \neq 0$. Suppose for a contradiction that the support of $a$ is bounded, i.e. $N_a = \max(E_a) < \infty$. Letting $u = x+N_a$, $v = -x$ in Eq. \eqref{eq:purelyball}, it follows that $x+N_a \notin E_a$ so that
\begin{equation}
N_a a(N_a)(b(x+N_a) - b(x)) = 0
\end{equation}
Thus $b(x+N_a)=b(x)$ and $x+N_a \in E_b$. Repeating this argument with $x$ replaced by $x+N_a$, it follows by induction that $b(x) = b(x+mN_a)$ for all $m \in \mathbb{Z}_{\geq 0}$, which contradicts our assumption on the decay of $b(r)$ as $r \to \infty$. Thus the support of $a$ is unbounded.

We next show that the support of $a$ forms an arithmetic progression. To see this, write $E_a = \{x_1,x_2,\ldots\}$ with $x_i > x_j$ whenever $i>j$.  Letting $i>j$ and substituting $u=x_i, \, v = \pm x_j$ into Eq. \eqref{appendix:perfball}, we deduce that
\begin{align}
(x_i-x_j)a(x_i) a(x_j) &= - (x_i+x_j)a(x_i+x_j) (b(x_i)-b(x_j)), \\
(x_i+x_j)a(x_i) a(x_j) &= - (x_i-x_j)a(x_i-x_j) (b(x_i)-b(x_j)).
\end{align}
Since the left-hand sides are non-zero, it must be the case that $a(x_i+x_j),\,a(x_i-x_j) \neq 0$ and $b(x_i) \neq b(x_j)$. In particular, $x_i \pm x_j \in E_a$ and $E_a$ is closed under sums and differences. Now let $R = x_1 = \min(E_a)$ and suppose there exists $y \in E_a$ such that $y \not\equiv 0 \mod R$. Then there exist integers $m > 0$ and $0 < k <R$ such that $y = mR + k$. By closure under differences, $y-mR = k \in E_a$ also, which contradicts minimality of $R$. We deduce that $x_j = jR$ and therefore $E_a = \{ n R: n \in \mathbb{Z}_{>0}\}$.

We now show that $E_b \subseteq E_a$. Suppose there exists some $u \in E_b \setminus E_a$. Then $u = mR + k$ for integers $m\geq 0$ and $ 0 < k < R$. Let $v = nR - k$ for some integer $n > m$. Then $u,v \notin E_a$ but $u+v = (m+n)R \in E_a$. Substituting into Eq. \eqref{eq:purelyball}, we deduce that $b(mR+k) = b(nR-k) \neq 0$ for all integers $n > m$, which contradicts the decay of $b(r)$. Thus $E_b \subseteq E_a$.

It remains to determine the specific behavior of $a(r)$ and $b(r)$ on $E_a$. To this end, let $u=nR$ and $v = \pm R$ in Eq. \eqref{eq:purelyball} for any integer $n>1$. Then
\begin{align}
b(R)-b(nR) =  \frac{(n-1)a(R)a(nR)}{(n+1)a((n+1)R)} = \frac{(n+1)a(R)a(nR)}{(n-1)a((n-1)R)},
\end{align}
implying that
\begin{equation}
a((n+1)R) = \frac{(n-1)^2}{(n+1)^2} a((n-1)R), \quad n=2,3,\ldots,
\end{equation}
from which it follows that
\begin{align}
\label{eq:casesfora}
a(nR) = \begin{cases}
    \frac{1}{n^2}a(R), & n \, \mathrm{odd} \\
    \frac{\lambda}{n^2}a(R), & n \, \mathrm{even}
\end{cases},
\end{align}
where it is useful to define $\lambda = 4 a(2R)/a(R)$. To proceed further, note that for any $u,\, v \in E_a$ with $u>v$ we can write Eq.~\eqref{eq:purelyball} as
\begin{equation}
b(u)-b(v) = - \frac{(u-v)a(u)a(v)}{(u+v)a(u+v)}.
\end{equation}
In particular, for any distinct positive integers $m$ and $n$, we have
\begin{align}
b((2m+1)R)-b((2n+1)R) &= \frac{1}{\lambda}\left(\frac{1}{(2m+1)^2} - \frac{1}{(2n+1)^2}\right)a(R),\\
b(2mR)-b((2n+1)R) &= \lambda \left(\frac{1}{(2m)^2} - \frac{1}{(2n+1)^2}\right)a(R), \\
b(2mR)-b(2nR) &= \lambda \left(\frac{1}{(2m)^2} - \frac{1}{(2n)^2}\right)a(R). 
\end{align}
Then, picking any $l > m > n$, the identity $b((2l+1)R)-b((2n+1)R) = (b((2l+1)R) - b(2mR)) + (b(2mR)-b((2n+1)R))$ demands that $\lambda^2=1$.

First consider the case $\lambda=1$. It follows by the above equations that $b(mR) - b(nR) = a(mR) - a(nR)$ for all positive integers $m$ and $n$. In particular,
\begin{equation}
b(nR) = a(nR) + (b(R)-a(R)), \quad n \in \mathbb{Z}_{>0},
\end{equation}
and our decay assumption on $b(r)$ sets $b(R) = a(R)$. We deduce that $E_b = E_a$ and 
\begin{equation}
a(nR) = b(nR) = \frac{1}{n^2}a(R), \quad n \in \mathbb{Z}_{>0}.
\end{equation}
Next consider the case $\lambda = -1$. In this case, Eq. \eqref{eq:casesfora} yields
\begin{equation}
a(nR) = \frac{(-1)^{n+1}}{n^2}a(R),\quad n \in \mathbb{Z}_{>0},
\end{equation}
while repeating the arguments above yields $b(nR) = -|a(nR)| = (-1)^n a(nR)$, which completes the proof of Claim 2.
\end{proof}

\subsection{Sufficient condition for a non-zero Mazur bound}
\label{appendix:twobody}
Even if the spin current operator $\hat{J}$ is not exactly conserved, it follows by the Mazur bound~\cite{mazur1969non,suzuki1971ergodicity} that ballistic transport can still occur if there exists another exactly conserved operator $\hat{K}$ that has sufficiently large overlap with $\hat{J}$ in the thermodynamic limit~\cite{Zotos_1997}. Indeed, this is precisely the scenario that Prosen demonstrated for the XXZ chain~\cite{ProsenBound} through the construction of suitable quasilocal $\hat{K}$. However, for a generic, chaotic, spin chain, we only expect local conservation laws $\hat{H}$ and $\hat{S}^z = \sum_{n\in \mathbb{Z}} \hat{S}_n^z$. In particular, if we assume evenness of the Hamiltonian $\hat{H}$ under the spatial inversion operator $\hat{\mathcal{I}}$ such that $\hat{\mathcal{I}}\hat{S}_{n}^z \hat{\mathcal{I}}^{\dagger} = \hat{S}_{-n}^z$ for all $n \in \mathbb{Z}$ and $\hat{\mathcal{I}}^\dagger = \hat{\mathcal{I}}$, it follows by unitarity of $\hat{\mathcal{I}}$ and cyclicity of the trace that for any thermal state at non-zero temperature, the thermal expectation values $\langle \hat{J}\hat{H} \rangle = \langle \hat{J}\hat{S}^z \rangle = 0$ by symmetry, since the conserved charges $\mathcal{\hat{I}}\hat{H}\hat{\mathcal{I}}^\dagger = \hat{H},\, \mathcal{\hat{I}}\hat{S}^z\hat{\mathcal{I}}^\dagger = \hat{S}^z$ are inversion-even but $\mathcal{\hat{I}}\hat{J}\hat{\mathcal{I}}^\dagger = -\hat{J}$ is inversion-odd. We deduce that generic, inversion-symmetric spin-chain Hamiltonians must exhibit diffusive spin transport at non-zero temperature, because there is no conserved operator that is symmetry-allowed to yield a non-zero Mazur bound. This argument generalizes straightforwardly to quantum or classical lattices of spins in any spatial dimension.

To justify the theoretical approach taken in the main text, we would ideally like to rule out the existence of such conserved operators rigorously. In this Appendix, we make partial progress along these lines and show that for the power-law interacting XXZ chains studied in this paper, there is no two-body conservation law $\hat{K}$ that could yield a non-zero Mazur bound for spin transport for model anisotropies $\Delta \neq 0, \pm 1$. While the restriction to two-body operators does not capture the phenomenology of e.g. the nearest-neighbour XXZ chain~\cite{ProsenBound}, it does capture the phenomenology of the Haldane-Shastry and nearest-neighbour XX models relevant to this work. 

The most general two-body operator $\hat{K}$ that respects the same discrete symmetries as $\hat{J}$ and generates a non-zero Mazur bound has the form
\begin{align}
    \hat{K} \coloneqq \sum_{n\in\mathbb{Z}, r\neq0} K(r) (\hat{S}_{n}^x \hat{S}_{n+r}^y - \hat{S}_{n}^y \hat{S}_{n+r}^x)
\end{align}
where $K$ is an odd function of its argument $K(-r)=-K(r)$. Requiring that $\hat{K}$ be a conserved operator leads to a nontrivial functional equation connecting $a(r)$, $b(r)$, and $K(r)$, namely
\begin{equation}
    K(u)a(v) - K(v)a(u) + K(u+v)(b(u)-b(v)) = 0
    \label{eq:two-body-operator}
\end{equation}
for all $u,v \in \mathbb{Z}$ such that $u,v,u+v \neq 0$, whose derivation mirrors that of Eq. \eqref{eq:dJ/dt}.
Eq. \eqref{eq:two-body-operator} can be viewed as an overdetermined system of linear equations for $K(v)$. We now show that this overdetermined system has no non-zero solutions for power-law-interacting XXZ models unless $|\Delta|$ is either zero or one.

To see this, pick an integer $x \neq 0$ for which $K(x)\neq 0$, and another integer $y \neq 0,\,x$ such that $z=x+y \neq 0$. Making the variable assignments $\{u=x,v=y\}$, $\{u=z,v=-x\}$, and $\{u=z, v=-y\}$ in Eq.~\ref{eq:two-body-operator} yields three simultaneous linear equations for $K(x)$, $K(y)$ and $K(z)$,
\begin{align}
    K(x)a(y)-K(y)a(x) + K(z)(b(x)-b(y)) &= 0, \\
    K(z)a(x)+K(x)a(z) + K(y)(b(z)-b(x)) &= 0, \\
    K(z)a(y)+K(y)a(z) + K(x)(b(z)-b(y)) &= 0.
\end{align}

Assuming that $a(x), \,a(y), \, b(x) - b(y) \neq 0$ subject to our assumptions on $x$ and $y$, we can solve for $K(z)$ to yield
\begin{equation}
    K(z) = \frac{K(y)a(x)-K(x)a(y)}{b(x)-b(y)} = \frac{-K(y)(b(z)-b(x))-K(x)a(z)}{a(x)} = \frac{K(x)(b(y)-b(z))-K(y)a(z)}{a(y)}.
\end{equation}

Combining the second and third expressions and the third and fourth expressions yields two simultaneous linear equations
\begin{align}
\label{eq:lin1}
\left(\frac{a(z)}{a(x)}- \frac{a(y)}{b(x)-b(y)}\right)K(x) + \left(\frac{a(x)}{b(x)-b(y)}+\frac{b(z)-b(x)}{a(x)}\right)K(y) &= 0,\\
\label{eq:lin2}
\left(\frac{a(z)}{za(x)}+\frac{b(y)-b(z)}{a(y)}\right)K(x) - \left(\frac{a(z)}{za(y)}-\frac{b(z)-b(x)}{a(x)}\right)K(y) &= 0.
\end{align}

Since $K(x) \neq 0$ by assumption, Eqs. \eqref{eq:lin1} and \eqref{eq:lin2} must have determinant zero,
 \begin{align}
    \left(\frac{a(z)}{a(x)}- \frac{a(y)}{b(x)-b(y)}\right)\left(\frac{a(z)}{a(y)}-\frac{b(z)-b(x)}{a(x)}\right)
    + \left(\frac{a(z)}{a(x)}+\frac{b(y)-b(z)}{a(y)}\right)\left(\frac{a(x)}{b(x)-b(y)}+\frac{b(z)-b(x)}{a(x)}\right) = 0.
\end{align}

Multiplying this equation by $a(x)a(y)(b(x)-b(y))$ finally yields the functional equation
\begin{equation}
\label{eq:conscond}
a(x)^2 (b(y)-b(z)) + a(y)^2(b(z)-b(x)) + a(z)^2 (b(x)-b(y)) + (b(x)-b(y))(b(y)-b(z))(b(z)-b(x)) = 0.
\end{equation}

Eq. \ref{eq:conscond} is a necessary condition for Eq. \ref{eq:two-body-operator} to have a non-zero solution. Imposing XXZ anisotropy $b(x) = \Delta a(x)$, Eq. \ref{eq:conscond} reduces to  
\begin{equation}
\Delta (1-\Delta^2) (a(x)-a(y))(a(y)-a(z))(a(z)-a(x)) = 0, \quad x,y \neq 0, \, x \neq y, \, z = x+y \neq 0.
\label{eq:necessary_cond}
\end{equation}

For the power-law XXZ models considered in this paper, for which $a(x) = 1/|x|^{\alpha}$, the steps leading to Eq. \eqref{eq:necessary_cond} are valid whenever $\Delta \neq 0$. In this specific case, we can take any $y > 0$, $y \neq x$ in Eq. \eqref{eq:necessary_cond} and obtain a contradiction whenever $|\Delta| \neq 1$. We deduce that for the power-law XXZ models defined by Eq. \eqref{eq:powerlawdeltapred}, there is no two-body conservation law that could generate a non-zero Mazur bound for spin transport away from the Haldane-Shastry and XX points.
\end{document}